\def \approxversion {}
\newcommand{\remove}[1]{#1}
\newcommand{\remove}[1]{}
\newtheorem{fact}{Fact}
\renewcommand{\paragraph}[1]{\noindent\textbf{#1}}
\renewcommand{\P}{\ensuremath{\mathcal{P}}}
\newcommand{\poly}{\ensuremath{\mathrm{poly}}}
\newcommand{\pluck}{{\tt Pluck-a-leaf }}
\newcommand{\splito}{{\tt Split }}
\newcommand{\spliti}{{\tt Split$(i)$ }}
\newcommand{\paste}{{\tt Paste-a-leaf }}
\newcommand{\merge}{{\tt Merge }}
\begin{document}
\mainmatter

\title{Additive Approximation for Near-Perfect Phylogeny Construction 
\thanks{This work was supported in part by the National Science
  Foundation under grant CCF-1116892, by an NSF Graduate Fellowship,
  and by the MSR-CMU Center for Computational Thinking.}}

\author{Pranjal Awasthi \and Avrim Blum \and Jamie Morgenstern \and Or Sheffet }

\institute{Carnegie Mellon University, Pittsburgh,\\ 5000 Forbes Ave., Pittsburgh PA 15213, USA,\\
\mails}

\maketitle

\begin{abstract}
  We study the problem of constructing phylogenetic trees for a given
  set of species. The problem is formulated as that of finding a
  minimum Steiner tree on $n$ points over the Boolean hypercube of
  dimension $d$.  It is known that an optimal tree can be found in
  linear time~\cite{linearding05} if the given dataset has a perfect
  phylogeny, i.e. cost of the optimal phylogeny is exactly
  $d$. Moreover, if the data has a near-perfect phylogeny, i.e. the
  cost of the optimal Steiner tree is $d+q$, it is
  known~\cite{blellochfixed06} that an exact solution can be found in
  running time which is polynomial in the number of species and $d$,
  yet exponential in $q$.  In this work, we give a polynomial-time
  algorithm (in both $d$ and $q$) that finds a phylogenetic tree of
  cost $d+O(q^2)$. This provides the best guarantees known---namely, a
  $(1+o(1))$-approximation---for the case $\log(d) \ll q \ll
  \sqrt{d}$, broadening the range of settings for which near-optimal
  solutions can be efficiently found.  We also discuss the motivation
  and reasoning for studying such additive approximations.
\end{abstract}

\section{Introduction}

Phylogenetics, a subfield of computational biology, aims to construct
simple and accurate descriptions of evolutionary history. These
descriptions are represented as evolutionary trees for a given set of
species, each of which is represented by some set of features
(\cite{gusfield97strings,semple2003phylogenetics}).  A typical choice
for these features are single nucleotide polymorphisms (SNPs), binary
indicator variables for common mutations found in
DNA\cite{Hinds18022005,The2003}; see, for example,
%%, a case which is well-motivated
\cite{blellochfixed06,linearding05,mpalon10,robins300,robinsk00,robinsk05}.
This challenging problem has attracted much attention in recent years,
with progress in studying various computational formulations of this
problem
(\cite{gusfield97strings,multistateblelloch10,blellochfixed06,linearding05,polyfernandez03,mpalon10}). The
problem is often posed as that of constructing the most parsimonious
tree induced by the set of species.

Formally, a \emph{phylogeny} or a \emph{phylogenetic tree} for a set
$C$ of $n$ species, each represented by a string (called taxa) of
length $d$ over a finite alphabet $\Sigma$, is an unrooted tree $T =
(V,E)$ such that $C \subseteq V \subseteq \Sigma^d$. Given a distance
metric $\mu$ over $\Sigma^d$, we define the cost of $T$ as
$\sum_{(u,v)\in E} \mu(u,v)$. The tree of \emph{maximum parsimony} for
a dataset is the tree which minimizes this cost with respect to the
Hamming metric; i.e., it is the optimum Steiner tree for the set $C$
under this metric.

The Steiner tree problem is known to be NP-hard in
general~\cite{karp_reducibility_1972}, and remains NP-hard even in the
case of a binary alphabet with the metric induced by the Hamming
distance~\cite{foulds82}. Extensive recent work, both experimental and
theoretical, has focused on the binary character set with the Hamming
metric
(\cite{gusfield97strings,blellochfixed06,linearding05,polyfernandez03,mpalon10,semple2003phylogenetics,Sridhar:2007:AEN:1322075.1322083,Damaschke:2006:PET:1140638.1140643}).
This version of the phylogeny problem will also be the focus of this
paper.

A phylogeny is called \emph{perfect} if each coordinate $i \in [d]$
flips exactly once in the tree (representing a single mutation of $i$
amongst the set of species)\footnote{Without loss of generality, we
  may assume each coordinate flips at least once, since all
  coordinates on which all species agree may be discarded up
  front.}. If a dataset admits a perfect phylogeny, an optimal tree
can be constructed in polynomial time~\cite{polybaca93} (even linear
time, in the case where the alphabet is
binary~\cite{gusfield97strings}). In this work, we investigate
\emph{near perfect} phylogenies -- instances whose optimal
phylogenetic tree has cost $d+q$, where $q\ll d$. Near perfect
phylogenies have been studied in theoretical
(\cite{multistateblelloch10,blellochfixed06,polyfernandez03,Damaschke:2006:PET:1140638.1140643})
and experimental settings
(\cite{Sridhar:2007:AEN:1322075.1322083}). The work of
\cite{multistateblelloch10,blellochfixed06,polyfernandez03,Damaschke:2006:PET:1140638.1140643}
has given a series of randomized algorithms which find the optimal
phylogeny in running time polynomial in $n$ and $d$ but exponential in
$q$. Clearly, when $q = \omega(\log{d})$, these algorithms are not
tractable.

An alternative approach for finding a phylogenetic tree of low cost is
to use a generic Steiner tree approximation algorithm. The best
current such algorithm yields a tree of cost at most $1.39(d+q)$
\cite{byrkalp10} (we comment that the exponential size of the explicit
hypercube with respect to its small representation size requires one
implement such an algorithm using techniques devised especially for
the hypercube, e.g. Alon et al.~\cite{mpalon10}.) However, notice that
for moderate $q$ (e.g., for $q = polylog(d)$), the {\em excess} of
this tree---meaning the difference between its cost and $d$---may be
extremely large compared to the excess $q$ of the optimal tree.  In
such cases, one would much prefer an algorithm whose excess could be
written as a function of $q$ only.
% For moderate $q$
%(e.g., for $q = o(d)$), such an approximation may be too costly.

In this work, we present a randomized poly$(n,d,q)$-time algorithm
that finds a phylogenetic tree of cost $d + O(q^2)$.

\begin{theorem}
\label{thm:main-theorem}
Given a set $C \subseteq \{0,1\}^d$ of $n$ terminals, such that the
optimal phylogeny of $C$ has cost $d+q$, there exists a randomized
$\poly(n,d,q)$-time algorithm that finds a phylogenetic tree of cost
$d + O(q^2)$ w.p. $\geq 1/2$.
\end{theorem}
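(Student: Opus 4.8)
\smallskip
\noindent\emph{Proof outline.}\quad
Fix an optimal phylogeny $T^{*}$ of cost $d+q$. Discarding constant coordinates up front, every coordinate flips at least once in $T^{*}$, so the number of coordinates that flip twice or more---call these \bad{}---is at most $q$, while all remaining coordinates (\good{}) flip exactly once. The starting point is that deleting the \bad{} coordinates leaves a perfect-phylogeny instance: a \good{} coordinate $i$ induces a single bipartition (split) of $V(T^{*})$---the two sides of the edge on which it flips---and any two splits of a tree are laminar, so every pair of \good{} coordinates is pairwise compatible (the patterns $00,01,10,11$ do not all appear among the terminals). Consequently the \bad{} coordinates form a vertex cover of the pairwise-incompatibility graph on $[d]$, and the standard $2$-approximation returns a set $B$, $|B|\le 2q$, whose complement $G=[d]\setminus B$ is pairwise compatible. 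By the classical characterization of binary perfect phylogenies, $C|_{G}$ then admits a perfect phylogeny $T_{G}$, computable in linear time \cite{gusfield97strings}, of cost exactly $|G|\ge d-2q$.

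It remains to fold the $|B|=O(q)$ \bad{} coordinates back in. The plan is to process $B$ in a uniformly random order $i_{1},\dots,i_{m}$, maintaining a phylogeny $T_{j}$ for the restriction $C|_{G\cup\{i_{1},\dots,i_{j}\}}$ and obtaining $T_{j}$ from $T_{j-1}$ by a short sequence of the local surgeries \splito{}, \merge{}, \pluck{} and \paste. These surgeries are chosen so that coordinate $i_{j}$ becomes realized while no \good{} coordinate ever acquires a second flip; the only coordinates that may each pick up a few extra flips are the already-processed \bad{} coordinates, of which there are at most $j-1=O(q)$, and $i_{j}$ itself ends up flipping $O(q)$ times. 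Since all \bad{} coordinates together flip at most $q+|B|=O(q)$ times in $T^{*}$, this per-step bound should follow by comparing $T_{j-1}$ with the tree obtained from $T^{*}$ by contracting all not-yet-processed coordinates: the \good{} splits force the two trees to share the same skeleton, so the edges and nodes of $T_{j-1}$ at which $i_{j}$ should be split can be read off from the data, only $O(q)$ applications of \spliti{} are needed, and \merge/\pluck/\paste{} repair the $O(1)$ collateral flips per earlier coordinate. Each surgery is computable in $\poly(n,d)$ time and $m=O(q)$, so the algorithm runs in $\poly(n,d,q)$ time, and the final tree $T_{m}$, a phylogeny for all of $C$, has excess over $d$ at most $O(q)\cdot O(q)=O(q^{2})$.

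The main obstacle is controlling the accumulation of the collateral flips across the $m$ phases: a worst-case ordering of $B$ could in principle make earlier \bad{} coordinates be re-routed many times, pushing the excess above $O(q^{2})$. This is exactly where randomness enters---rather than arguing that a good deterministic order exists, we bound the \emph{expected} total collateral cost, over the uniformly random processing order, by $O(q^{2})$, and apply Markov's inequality to conclude that the constructed phylogeny has cost $d+O(q^{2})$ with probability at least $1/2$. The remaining technical work is to make this informal picture precise: defining each of \splito{}, \merge{}, \pluck{} and \paste{} so that it provably preserves the \good{} skeleton and changes flip-counts only in the claimed controlled way, and verifying that the surgery locations can indeed be identified from $C$ alone, without knowledge of $T^{*}$.
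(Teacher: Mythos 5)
Your proposal takes a genuinely different route from the paper, but it has gaps at exactly the points where the real difficulty of the problem lives. First, the set $B$ returned by the vertex-cover $2$-approximation is \emph{not} the set of \bad{} coordinates: pairwise compatibility does not imply goodness, so $G=[d]\setminus B$ may contain coordinates that flip twice in $T^{*}$, and $B$ may contain \good{} ones. Consequently the perfect phylogeny $T_{G}$ on $C|_{G}$ can have a topology quite different from $T^{*}$ restricted to $G$ (terminals collapse, and a coordinate that is bad in $T^{*}$ but compatible with everything is forced to flip once in $T_{G}$ in a place $T^{*}$ never flips it). This undermines the claim that ``the \good{} splits force the two trees to share the same skeleton'' and hence that the surgery locations for $i_{j}$ ``can be read off from the data.'' Second, and more fundamentally, the re-insertion step---realizing one coordinate of $B$ on an existing tree with only $O(q)$ new edges and $O(1)$ collateral flips per previously inserted coordinate---is asserted rather than proved, and you yourself flag it as ``remaining technical work.'' Deciding where on a fixed tree a multiply-flipping coordinate should flip, and doing so without knowledge of $T^{*}$ while provably staying within the optimal excess, is the crux of the whole problem; the expected-cost-plus-Markov argument is likewise stated with no supporting computation, so the probability-$1/2$ guarantee is not established.

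For contrast, the paper never attempts to insert \bad{} coordinates into a \good{} skeleton. It works in the opposite direction: it repeatedly \emph{removes} \good{} coordinates while maintaining a ``proper partition'' of the terminals ({\tt Pluck-a-leaf} removes a leaf-adjacent good edge; {\tt Split} severs the tree at a good edge whose two endpoints can be identified by pattern matching, i.e., by the coordinates fixed on each side of the cut). Randomness is used only to pick which coordinate to split on, and the analysis is a union bound showing that each of the at most $4q$ splits hits a bad or unidentifiable coordinate with probability at most $1/(8q)$. Once only $O(q^{2})$ non-constant coordinates remain, the algorithm pays for all of them at once with an MST $2$-approximation, which is where the $O(q^{2})$ additive loss actually comes from. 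If you want to pursue your insertion-based scheme, you would need (i) a characterization of when the perfect phylogeny on a compatible subset of coordinates agrees with $T^{*}$'s structure, and (ii) a concrete, certified procedure (e.g., a small-parsimony/Fitch computation on a fixed topology together with a comparison to $T^{*}$) bounding the cost of realizing each remaining coordinate; as written, neither is in place.
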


Note that Theorem~\ref{thm:main-theorem} provides a substantial
improvement over prior work for the case that $\log{d} \ll q \ll
\sqrt{d}$.  In this range, the exact algorithms are no longer
tractable, and the multiplicative approximations yield significantly
worse bounds. Alternatively, viewed as a multiplicative guarantee, in
this range our tree is within a $1+o(1)$ factor of optimal.  To the
best of our knowledge, this is the first work to give an additive
poly-time approximation to either the phylogeny problem or any
(non-trivial) setting the Steiner tree problem. One immediate
question, which remains open, is whether our results can be improved
to $d + o(q^2)$ or perhaps even to $d+O(q)$.
%One immediate question, which remains open, is whether our results can be extended to the case where $\sqrt{d} \leq q \ll d$.

The rest of the paper is organized as follows. After surveying related
work in Section~\ref{subsec:rw}, we detail notation and preliminaries
in Section~\ref{sec:preliminaries}. The presentation of our algorithm
is partitioned into two parts. In Section~\ref{sec:simple-case}, we
present the algorithm for the case where no pair of coordinates is
identical over all terminals (formal definition there). In
Section~\ref{sec:general-case}, we alter the algorithm for the simple
case, in a nontrivial way, so that the modified algorithm finds a
low-cost phylogeny for any dataset. We conclude in
Section~\ref{sec:open-problems} with a discussion, motivating the
problem of near-perfect phylogeny tree from a different perspective,
and present open problems for future research.
\subsection{Related Work}
\label{subsec:rw}
As mentioned in the introduction, the problem of constructing an
optimal phylogeny is NP-complete even when restricted to binary
alphabets \cite{foulds82}.  Schwartz et
al. \cite{multistateblelloch10} give an algorithm based on an Integer
Linear Programming (ILP) formulation to solve the multi-state problem
optimally, and show experimentally the algorithm is efficient on small
instances.  Perfect phylogenies (datasets which admit a tree in which
any coordinate changes exactly once) have optimal parsimony trees
which can be constructed in linear time in the binary case
\cite{linearding05} and in polynomial time for a fixed alphabet
\cite{polyfernandez03}. Unfortunately, finding the perfect phylogeny
for arbitrary alphabets is NP-hard \cite{bodlaendertwostrikes}.
Recent work \cite{blellochfixed06} gives an algorithm to construct
optimal phylogenetic trees for binary, near-perfect phylogenies (where
only a small number of coordinates mutate more than once in the
optimal tree). However, the running time of the algorithm presented in
their work ~\cite{blellochfixed06} is exponential in the number of
additional mutations.  
%To the authors' knowledge, neither a
%polynomial-time (in the number of additional mutations) algorithm nor
%a hardness result for the near-perfect case is known.

There has also been a lot of work on computing multiplicative
approximations to the Steiner tree problem.  A Minimum Spanning Tree
(MST) over the set of terminals achieves an approximation ratio of 2
and a long line of work has led to the current best bound of 1.39
\cite{takahashifirst80,Berman92,springerlink:10.1007/BFb0023489,Karpinski95newapproximation,zelikovsky96,robins300,Hougardy99,robinsk00,robinsk05,byrkalp10}.
The more recent of these papers use a result due to Borchers and Du
\cite{borcherskratio95} showing that an optimal Steiner tree can be
approximated to arbitrary precision using $k$-restricted Steiner
trees.

Some of these approximations to the Steiner tree problem are not
immediately extendable to the problem of constructing phylogenetic
trees. This is because the size of the vertex set for the phylogeny
problem is exponential in $d$ (there are $2^d$ vertices in the
hypercube). If an algorithm works on an explicit representation of the
graph $G$ defined by the hypercube, then it does not solve the
phylogeny problem in polynomial time. However, the line of work
started by Robins and Zelikovsky \cite{robinsk00,robinsk05} used the
notion of $k$-restricted Steiner trees, which {\em can} be efficiently
implemented on the hypercube. In particular, Alon et
al.~\cite{mpalon10} showed that in finding the optimal $k$-restricted
component for a given set of $k$ terminals, it is sufficient to only
consider topologies with the given $k$ terminals at the leaves. Using
this, they were able to extend that work to achieve a 1.55
approximation ratio for the maximum parsimony problem, and a 16/9
approximation for maximum likelihood. Byrka et al. \cite{byrkalp10}
considered a new LP relaxation to the $k$-restricted Steiner tree
problem and achieved an approximation ratio of 1.39, which can be
combined with the topological argument from Alon et
al.~\cite{mpalon10} to achieve the same ratio for phylogenies.

\section{Notation and Preliminaries}
\label{sec:preliminaries}
Our dataset $C\subseteq \{0,1\}^d$ consists of $n$ terminals over $d$
binary coordinates. A Steiner tree (or phylogeny) over $C$ consists of
a tree $T$ on the hypercube that spans $C$ (plus possibly additional
Steiner nodes), where we label each edge $e$ in $T$ with the index $i
\in \{1,\ldots,d\}$ of the coordinate flipped on edge $e$.  The cost
of such a Steiner tree is the number of edges in the tree. Given a
collection of datasets $\P = \{P_1, P_2, \ldots, P_k\} \subseteq C$ we
define the Steiner forest problem as the problem of finding a minimal
Steiner tree on every $P\in\P$ separately. We refer to such collection
as a partition from now on, even though it may contain a subset of the
original terminal set $C$.

In this work, we consider instances $C$ whose minimum Steiner tree has
cost $d+q$, and think of $q = o(\sqrt{d})$ (otherwise, any
off-the-shelf constant approximation algorithm for the Steiner problem
gives a solution of cost $\leq d+O(q^2)$). We fix $T$ to be some
optimal Steiner tree. By optimality, all leaves in $T$ must be
terminals, whereas the internal nodes of $T$ may be either terminals
or non-terminals (non-terminals are called \emph{Steiner nodes}).  We
define a coordinate $i$ to be \emph{good} if exactly one edge in $T$
is labeled $i$, and \emph{bad} if two or more edges in $T$ are labeled
with $i$. We may assume all $d$ coordinates appear in the tree,
otherwise, some coordinates in $C$ are fixed and so the dimensionality
of the problem is less than $d$. Therefore, at most $q$ coordinates
are bad (each bad coordinate flips at least twice and thus adds a cost
of at least 2 to the tree).

Given a coordinate $i$ of a set of terminals $P$, we define an
\emph{$i$-cut} as the partition $P_0 = \{x\in P:\ x_i = 0\}$ and $P_1
= \{x\in P:\ x_i = 1\}$. We call two coordinates $i\neq j$
\emph{interchangeable} if they define the same cut. We now present the
following basic facts which are easy to verify
(see~\cite{blellochfixed06} for proofs).
\begin{fact}
\label{fact:basic-properties}~~
\begin{enumerate}
\item Let $S$ be a set of interchangeable coordinates. Then all
  coordinates in $S$ appear together in the optimal tree $T$, adjacent
  to one another. That is, in $T$ there are paths s.t. for each path:
  all of its edges are labeled by some $i\in S$, all coordinates in
  $S$ have an edge on the path, and all internal nodes on the paths
  aren't terminals and have degree $2$. On these paths, any reordering
  the $S$-labeled edges yields an equivalent optimal tree.
\item For any two good coordinates, $i\neq j$, one side of the $i$-cut
  is contained within one side of the $j$-cut. Equivalently, there
  exist values $b_j$ such that all terminals on one side of the
  $i$-cut have their $j$th coordinate set to $b_j$.
\item Fix any good coordinate $i$ and let $j$ be a good coordinate
  such that all terminals on one side of the $i$-cut have their $j$
  coordinate set to $b_j$. Then both endpoints of the edge labeled $i$
  have their $j$th coordinate set to $b_j$.

\item A good coordinate $i$ and a bad coordinate $i'$ cannot define
  the same cut.

\end{enumerate}
\end{fact}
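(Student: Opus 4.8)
The plan is to derive all four properties from a single structural principle plus two optimality observations. The principle: if coordinate $i$ labels exactly the edges in a set $F_i\subseteq E(T)$, then deleting $F_i$ splits $T$ into $|F_i|+1$ subtrees, coordinate $i$ is constant inside each subtree, and a terminal lands in a given subtree exactly according to its value on $i$; in particular, for a good $i$, $T\setminus e_i$ has exactly two pieces whose terminal sets are the two sides of the $i$-cut. Observation (a): an optimal $T$ has no terminal-free subtree hanging off a single edge, since deleting such a subtree lowers the cost. Observation (b): cost-preserving slides are available — an $i$-labeled edge incident to a non-terminal vertex $v$ whose other incident edges all carry labels $\neq i$ may be slid past one of them, replacing $v$ by another Steiner node and leaving all terminal coordinates and the total cost unchanged.

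For Fact 1, first note that interchangeable coordinates all induce the same bipartition $(P_0,P_1)$, with every terminal of $P_0$ having value $0$ and every terminal of $P_1$ value $1$ on every coordinate of $S$ (after renaming). Deleting the $S$-labeled edges and using (a) to forbid terminal-free pendants forces them onto one or more paths that avoid terminals and branch vertices in their interiors; the slides of (b) then let us reorder the $S$-labels along such a path, and a short count shows each coordinate of $S$ appears exactly once per path. Every interior vertex must be a degree-$2$ Steiner node for a clean reason: a terminal placed there, or a terminal in a subtree hanging off there, would have flipped a proper nonempty subset of the coordinates of $S$ relative to $P_0$, hence would lie in neither $P_0$ nor $P_1$ — impossible. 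Since reordering the labels changes neither feasibility nor cost, each reordering yields an equivalent optimum.

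For Facts 2 and 3, apply the principle to the single edges $e_i=(u,v)$ and $e_j$ of two good coordinates. Deleting $e_i$ yields a piece $A$ (terminals with $x_i{=}0$) and a piece $B$ (terminals with $x_i{=}1$), and $e_j$ lies in exactly one of them, say $A$; then all of $B$, and in particular $v$, sits on one fixed side of $e_j$, and reading off the corresponding value of coordinate $j$ gives Fact 2. If moreover every terminal on the $B$-side has $x_j=b_j$, that fixed side of $e_j$ is the $b_j$-side, so $v$ has $x_j=b_j$, and since the label of $e_i$ is $i\neq j$ the same holds for $u$ — this is Fact 3. The only delicate point is excluding $e_j\in B$: combining (a) with the homogeneity of $B$'s terminals in coordinate $j$, the sole alternative to a forbidden terminal-free pendant forces $e_j$ to realize the same cut as $e_i$, i.e. makes $i$ and $j$ interchangeable, in which case the claim instead follows from Fact 1.

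For Fact 4, suppose a good coordinate $i$ and a bad coordinate $i'$ induced the same cut. Pick two $i'$-labeled edges whose connecting path carries no third $i'$-label; deleting both splits $T$ into three subtrees $X$, $M$, $Y$ with $M$ in the middle, $i'$ constant on each, and $M$'s value opposite to that of both $X$ and $Y$. Then the terminals of $M$ form one side of the $i'$-cut and those of $X\cup Y$ the other — a bipartition that, by the principle applied to the single edge $e_i$, would have to be obtainable by deleting one edge. I expect this reduction to be the main obstacle: one must verify, by cases on whether $e_i$ lies in $X$, in $M$, or in $Y$, that a single-edge deletion cannot separate the terminals of $M$ from those of $X\cup Y$ unless $X$ or $Y$ is terminal-free, which (a) forbids; this contradiction establishes Fact 4.
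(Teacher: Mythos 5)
Your treatment of items 1--3 is essentially sound and takes a slightly different route from the paper's: you organize everything around deleting the label class $F_i$ and analyzing the resulting components, whereas the paper argues along terminal-to-terminal paths through $e_i$ and consolidates stray flips at branch vertices. For item 1 the two arguments are morally the same (your ``short count'' is exactly the paper's cost-saving consolidation of repeated $S$-flips, and both are left at sketch level); for items 2 and 3 your component-based argument is actually cleaner than the paper's, and you correctly flag the $e_j\in B$ subcase that the paper's write-up of item 3 garbles.

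Item 4, however, has a genuine gap. Your pivotal intermediate claim --- that after deleting two ``consecutive'' $i'$-edges the terminals of the middle piece $M$ form one side of the $i'$-cut and those of $X\cup Y$ the other --- is false whenever $i'$ labels three or more edges: then $i'$ is not constant on $X$ or $Y$, so $X\cup Y$ can contain terminals with either $i'$-value. On top of that, you explicitly leave the concluding case analysis unverified (``one must verify, by cases\dots''), and that verification is not routine (one must also rule out, e.g., a terminal-free side of $e_i$ inside $M$). All of this is unnecessary: item 4 is a one-line corollary of item 1, which you have already proved. By definition, $i$ and $i'$ defining the same cut means they are interchangeable, and your own conclusion in item 1 --- each coordinate of $S$ appears exactly once per path, hence all coordinates of $S$ label the same number of edges of $T$ --- immediately contradicts $i$ being good (one edge) while $i'$ is bad (at least two). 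This is exactly the paper's proof of item 4; replace your construction with it.
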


It immediately follows from Fact~\ref{fact:basic-properties} that for
a given good coordinate $i$ one can efficiently reconstruct the
endpoints of the edge on which $i$ mutates, except for at most $q$
coordinates. This leads us to the following definition. Given $i$, we
denote $D^i$ as the set of all coordinates that are fixed to a constant value $v_i$ on at least
one side of the $i$-cut~(different coordinates may be fixed on
different sides), and we denote $\mathbf{b}^i$ as the vector of the
corresponding values, i.e. $v_i$'s, of the coordinates in $D^i$. The pair $(D^i,
\mathbf{b}^i)$ is called the \emph{pattern} of coordinate $i$. That
set of terminals that \emph{match the pattern} of $i$ is the set
$P_{\textbf{b}^i} = \{x\in P:\ \forall j\in D^i, ~x_j = {b^i_j}\}$.

\section{A Simple Case: Each Coordinate Determines a Distinct Cut}
\label{sec:simple-case}

To show the main ideas behind our algorithm, we first discuss a
special case in which no two coordinates $i$ and $j$ define the same
cut on the terminal set $C$. Algorithms for constructing phylogenetic
trees often make this assumption as they preprocess $C$ by contracting
any pair of interchangeable coordinates. However, in our case such
contractions are problematic, as we discuss in the next section. So in
Section~\ref{sec:general-case}, when we deal with the general case, we
deal with interchangeable coordinates in a non-trivial fashion.

\subsection{Basic Building Blocks}
\label{subsec:building_blocks}

We now turn to the description of our algorithm. On a high level it is
motivated by the notion of maintaining a proper partition of the
terminals.

\begin{definition}
  Call a partition $\P$ \emph{proper} if the forest produced by
  restricting the optimal tree $T$ to the components $P\in \P$ is
  composed of edge disjoint trees.
\end{definition}

Equivalently, the path in $T$ between two nodes $x$ and $y$ in the
same component $P$ of $\P$ does not pass through any node $x'$ in any
different component $P'$ of $\P$.  Clearly, our initial partition, $\P
= \{C\}$, is proper. Our goal is to maintain a proper partition of the
current terminals while decreasing the dimensionality of the problem
in each step. This is implemented by the two subroutines we now
detail.
%\vspace{-.1in}
\\
\paragraph{Pluck a Leaf and Paste a Leaf.}
The first subroutine works by building the optimal phylogeny bottom-up, finding a good coordinate
$i$ adjacent to a leaf terminal $t$ in the tree, and replacing $t$
with its parent ($t$ with $i$ flipped) in the set of
terminals. Observe that if $i$ is a good coordinate, then this removes
the only occurrence of $i$, leaving all terminals in our new dataset
with a fixed $i$ coordinate, thus reducing the dimensionality of the
problem by $1$.

%\vspace{-.25in}
\begin{figure}[ht]
  \fbox{\parbox{\textwidth}{
  \textbf{\tt Pluck-a-leaf} \\
\textsf{\textbf{input:}} A partition $\P$ of current terminals.\\
  \textsf{\textbf{if}} there exists $P\in\P$ and $x\in P$ s.t. some coordinate $i$ is non-constant on $P$, but only the terminal $x$ has $x_i =0$ (or $x_i=1$), then:
  \vspace{-.1in}
\begin{itemize}
\item Set $P' = P \setminus \{x\} \cup \{\bar x^i\}$, where $\bar x^i$ is identical to $x$ except for flipping $i$.
\item Return $x$ and $\mathcal{P}' = \mathcal{P} \setminus \{P\} \cup \{P'\}$.
\end{itemize}
  \vspace{-.1in}
 \textsf{\textbf{else} fail}.
}}
\end{figure}
%\vspace{-.2in}

The matching subroutine to \pluck is {\tt Paste-a-leaf}: if \pluck
succeeds and returns some $(x, \P')$, and we have found a Steiner
forest for the terminals in $\P'$. Then {\tt Paste-a-leaf} merely
connects $x$ with $\bar x^i$ by an edge labeled $i$, then returns the
resulting forest. (We omit formal description.)

\begin{lemma}
\label{clm:pluck-succeeds}
If $\P$ is a proper partition and \pluck succeeds, then $\P'$ is a
proper partition.
\end{lemma}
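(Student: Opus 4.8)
I would prove the slightly stronger statement that \pluck can only \emph{shrink} the relevant subtree: the minimal subtree of $T$ spanning $P'$ is contained in the one spanning $P$. Since \pluck leaves every component of $\P$ other than $P$ untouched, properness of $\P'$ then follows at once.

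Write $T_R$ for the minimal subtree of $T$ spanning a component $R$, so that ``$\P$ proper'' is precisely the assertion that the $T_R$, $R\in\P$, are pairwise edge-disjoint. \pluck replaces a single component $P$ by $P' = (P\setminus\{x\})\cup\{\bar x^i\}$, so it is enough to show $E(T_{P'})\subseteq E(T_P)$: then $T_{P'}$ is still edge-disjoint from each (unchanged) $T_R$, $R\neq P$, and $\P'$ is proper. Since $T_P$ is connected and contains $P\setminus\{x\}$, this reduces to showing $\bar x^i\in V(T_P)$ — for then $T_P$ already contains a tree spanning $P'$, hence so does the minimal one. In fact I would establish the sharper claim that $x$ is a leaf of $T_P$ whose incident edge is labeled $i$, so that $T_{P'} = T_P - x$.

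Assume WLOG that $x$ is the unique terminal with $x_i=0$. First I would use optimality of $T$ to control the coordinate-$i$ cut: deleting the $i$-labeled edges of $T$ splits $T$ into pieces constant in coordinate $i$, and an exchange argument shows each such piece must contain a terminal (a piece with no terminal and two or more incident $i$-edges can be rerouted to strictly decrease cost, contradicting optimality; a piece with no terminal and at most one incident $i$-edge would be a non-terminal leaf of $T$, again impossible). As $x$ is the only terminal with $x_i=0$, there is therefore exactly one piece on the ``$i=0$'' side, i.e.\ coordinate $i$ is good, with a single edge $e=(u,\bar u^i)$, and $u$'s side of $e$ contains $x$ and no other terminal. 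Restricting to $T_P$: since $i$ is non-constant on $P$ we have $e\in E(T_P)$, and the $u$-side of $e$ inside $T_P$ is a subtree whose only leaf that is a leaf of $T_P$ is $x$ (all leaves of $T_P$ are terminals of $P$ by minimality and Fact~\ref{fact:basic-properties}). Hence that side is a path from $x$ to $u$ with all edges labeled by coordinates $\neq i$. If this path had positive length, $x$ would be a leaf of $T_P$ with incident edge labeled by some $c\neq i$, and $c$ would determine the same cut as $i$ on the current terminal set, contradicting the simple-case hypothesis that distinct coordinates cut $C$ differently; so the path has length $0$, $x=u$, and $\bar x^i=\bar u^i\in V(T_P)$, with $e=(x,\bar x^i)$ the unique $T_P$-edge at $x$.

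The delicate point is exactly the good-coordinate step and its interaction with terminals removed by earlier plucks: \pluck only guarantees that $x$ is the unique terminal with $x_i=0$ among the \emph{current} terminals, not among all of $C$, so to run the exchange argument on $T$ one must also carry the invariant (preserved by each application of \pluck, which deletes such a pendant leaf) that no ``orphaned'' $i=0$ region of $T$ can survive; checking this, and matching cuts on $P$ against cuts on the whole current terminal set in order to legitimately invoke non-interchangeability, is where the real work lies. The remainder — deleting a leaf leaves a subtree, and a subtree of $T_P$ inherits edge-disjointness from $T_P$ — is routine bookkeeping.
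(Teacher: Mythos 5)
Your top-level reduction is the same as the paper's: show that $x$ is a leaf of $T[P]$ attached by the (unique) $i$-labeled edge of $T[P]$, and properness of $\P'$ follows at once. But the way you try to establish this contains a genuine gap, which you yourself flag as ``where the real work lies'' and then do not do. Concretely, your exchange argument is run on the whole optimal tree $T$ and concludes that ``coordinate $i$ is good, with a single edge $e=(u,\bar u^i)$.'' That conclusion is not available: \pluck{}only guarantees that $x$ is the unique terminal with $x_i=0$ \emph{among the current terminals of the single component $P$}. Coordinate $i$ may well be bad in $T$ --- for instance its two occurrences may have been separated by an earlier \splito{}into two different components, or other original terminals of $C$ with $i$th coordinate $0$ may live in other components or may already have been plucked --- and then there are several ``$i=0$'' pieces of $T$, each legitimately containing a terminal of $C$, and your cut-counting argument simply fails. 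Everything downstream (the single global edge $e$, the claim that $u$'s side of $e$ contains $x$ and no other terminal, hence that the $u$-side of $T_P$ is a path ending at $x$) rests on this intermediate claim, which is not merely unproven but false in general.

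The repair is the paper's purely local argument, which never needs goodness of $i$ in $T$: if $x$ had degree $\geq 2$ in $T[P]$, minimality of $T[P]$ would give two edge-disjoint paths from $x$ to two other terminals of $P$, both of which lie on the $i=1$ side since $x$ is the unique current terminal of $P$ with $x_i=0$; so $T[P]$ alone would contain two $i$-labeled edges, and projecting the segment of $T[P]$ between them onto the $i=1$ side yields a strictly cheaper edge-disjoint forest, contradicting optimality of $T$. This uses exactly and only the uniqueness that \pluck{}provides. Your second flagged worry --- that non-interchangeability is assumed for cuts of $C$ but is invoked for cuts of the current component $P$ --- is a real subtlety that the paper's own sketch also elides (it asserts but does not argue that the leaf edge is labeled $i$), so it is not a defect peculiar to your write-up; the global-goodness step, however, is a wrong turn rather than an omitted verification.
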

\begin{proof}[Sketch] Let $T[P]$ be the subtree in which $x$
  resides. We claim that $x$ is a leaf in $T[P]$, attached by an edge
  labeled $i$ to the rest of the terminals. If this indeed is the
  case, then removing $i$ means removing a leaf-adjacent edge from
  $T[P]$ which clearly leaves all components in the forest
  edge-disjoint.

  Wlog $x$ lies on the $i=0$ side of the cut. If $x$ isn't a leaf,
  then at least two disjoint paths connect $x$ to two other
  terminals. Since $\P$ is proper, both these terminals are in
  $P$. This means $T[P]$ crosses the $i$-cut twice, but then we can
  replace $T[P]$ with an even less costly tree in which $i$ is flipped
  once, by projecting the path between the two occurrences of $i$ onto
  the $i=1$ side.
\end{proof}{\qed}

Observe that lemma~\ref{clm:pluck-succeeds} holds only when the
underlying alphabet of the problem is binary. In particular, for a
non-binary alphabet, such $x$ can be a non-leaf.
\\\\
\paragraph{Split and Merge.} When \pluck can no longer find leaves to pluck, we
switch to the second subroutine, one that works by splitting the set
of terminals into two disjoint sets, based on the value of the $i$-th
coordinate. We would like to split our set of terminals according to
the $i$-cut, and recurse on each side separately. But, in order to
properly reconnect the two subproblems, we need to introduce the two
endpoints of the $i$-labeled edge to their respective sides of the
$i$-cut. Our \splito subroutine deals with one particular case in which
these endpoints are easily identified.

\begin{figure}[ht]
  \fbox{\parbox{\textwidth}{
\textbf{\spliti}\\
\textbf{\textsf{input:}} A partition $\P$ of current terminals, a coordinate $i$ that is not
constant on every component of $\P$.\\
\vspace{-.2in}
\begin{itemize}
\item Find a component $P$ on which $i$ isn't constant. Denote the $i$-cut of $P$ as $(P_0, P_1)$.
\item Find $P_{\mathbf{b}^i}$, the set of terminals that match the pattern of $i$.
\item \textbf{\textsf{if}} exists some $x$ which is the \emph{unique} terminal that matches the pattern of $i$ in one side of the cut (that is, if for some $x$ we have $P_{\mathbf{b}^i}\cap P_0 = \{x\}$ or $P_{\mathbf{b}^i}\cap P_1 = \{x\}$)
\begin{itemize}
\item Flip the $i$-th coordinate of $x$, and let $\bar x^i$ be the resulting node.
\item Add $x$ to its side of the $i$-cut, add $\bar x^i$ to the other side of the cut.
\item Return $x,\bar x^i$ and $\P' = \P \setminus \{P\} \cup \{P_0, P_1\}$.
\end{itemize}
{\sf {\bf else} fail}.
\end{itemize}
}}
\end{figure}

The matching subroutine to \splito is {\tt Merge}: Assume \splito succeeds and returns some $(x,\bar x^i, \P')$, and assume we have found a Steiner forest for the terminals in $\P'$. Then {\tt Merge} merely connects $x$ with $\bar x^i$ by an edge labeled $i$, then returns the resulting forest. (Again, formal description is omitted.)

\begin{lemma}
\label{clm:split_succeeds}
Assume $\P$ is a proper partition. Assume \splito is called on a good
coordinate $i$ s.t. the edge labeled $i$ in $T$ has at least one
endpoint which is a terminal. Then the returned partition $\P'$ is
proper.
\end{lemma}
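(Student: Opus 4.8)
The plan is to argue that when \splito is called on a good coordinate $i$ whose labeling edge $e$ in $T$ has at least one terminal endpoint, the subtree $T[P]$ restricted to the relevant component $P$ is cut by $e$ into exactly the two pieces corresponding to $P_0$ and $P_1$, with $x$ and $\bar x^i$ playing the roles of the two endpoints of $e$. Granting this, the forest obtained by restricting $T$ to $\P'$ differs from the one obtained by restricting $T$ to $\P$ only in that the tree $T[P]$ has been split along the single edge $e$ and then $x$ (resp.\ $\bar x^i$) has been added to the appropriate side; since $e$ was a single edge, the two resulting pieces are edge-disjoint, and edge-disjointness of all other components is inherited from properness of $\P$. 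So $\P'$ is proper.

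The key structural steps I would carry out, in order, are: (1) Since $i$ is good, there is a unique edge $e$ in $T$ labeled $i$; because $\P$ is proper, the path in $T$ between any two terminals of $P$ stays inside $T[P]$, so $e\in T[P]$ (it must be used to connect the nonempty sides $P_0$ and $P_1$, and it is the only $i$-edge available). Removing $e$ from $T[P]$ leaves two subtrees, one containing all of $P_0$ and the other all of $P_1$ (no terminal of $P$ can be on the ``wrong'' side, again since $e$ is the only $i$-labeled edge). (2) Identify the endpoints $u\in P_0$-side, $w\in P_1$-side of $e$. By Fact~\ref{fact:basic-properties}(2)--(3), each good coordinate $j$ that is fixed to $b_j^i$ on one side of the $i$-cut has both $u$ and $w$ set to $b_j^i$ on that coordinate; this is precisely the statement that $u$ and $w$ match the pattern $(D^i,\mathbf b^i)$ of $i$ (bad coordinates in $D^i$ require the separate observation that a coordinate fixed on one side of the cut has its value at the adjacent endpoint determined, which also follows since the endpoint lies on that side). (3) Now use the hypothesis that $e$ has a terminal endpoint, say $w$ is a terminal, lying on (wlog) the $P_1$ side: then $w\in P_{\mathbf b^i}\cap P_1$. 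Any other terminal $y\in P_{\mathbf b^i}\cap P_1$ would have to be separated from $w$ within $T[P_1]$ by an edge labeled with some coordinate in $D^i$ on which $y$ and $w$ agree (they both match the pattern), which I will show is impossible — so $w$ is the \emph{unique} terminal matching the pattern on its side, meaning the \textsf{if} condition of \splito triggers with $x=w$. Hence the node $\bar x^i$ that \splito creates equals the true other endpoint $u$ of $e$ (it agrees with $w$ off coordinate $i$ and flips $i$), so the split performed by \splito is exactly the split of $T[P]$ along $e$ with the endpoints reattached.

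The main obstacle is step (3): showing that the terminal endpoint $x$ of $e$ is genuinely the \emph{unique} pattern-matcher on its side of the cut, so that \splito does not fail and does not pick the wrong $x$. The delicate point is that $P_{\mathbf b^i}$ is defined via \emph{all} coordinates fixed on at least one side (including possibly bad ones), whereas the clean separation facts (Fact~\ref{fact:basic-properties}(2)--(3)) are about good coordinates; I will need to argue carefully that if two distinct terminals on the same side of the $i$-cut both match the pattern, the path between them in $T$ cannot cross $e$ (properness) and cannot cross any coordinate they agree on, forcing them to coincide — and to handle the interaction with bad coordinates in $D^i$ using the simple-case assumption that no two coordinates define the same cut, which rules out a bad coordinate masquerading as the good coordinate $i$ (Fact~\ref{fact:basic-properties}(4)). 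I also need the small observation that the endpoint $u$ produced as $\bar x^i$ is the correct Steiner (or terminal) node of $T$, which is immediate once $x$ is correctly identified since flipping a single good coordinate from a node of $T$ lands on the adjacent node of $T$.
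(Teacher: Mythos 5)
Your overall skeleton matches the paper's: reduce properness of $\P'$ to showing that the node $x$ chosen by \splito and $\bar x^i$ are exactly the two endpoints of the unique $i$-labeled edge of $T[P]$, and get that from the observation that both endpoints of that edge match the pattern $(D^i,\mathbf b^i)$. Steps (1) and (2) of your plan are essentially the paper's argument (the paper handles all coordinates of $D^i$ at once: those fixed on the endpoint's own side are immediate for a terminal endpoint, and those fixed on the far side can only flip after the path traverses the endpoint).

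The genuine gap is in your step (3). You assert that a second terminal $y\in P_{\mathbf b^i}\cap P_1$ would have to be separated from $w$ by an edge labeled with a coordinate in $D^i$, and that this is impossible. This is backwards: since $y$ and $w$ both match the pattern, they \emph{agree} on all of $D^i$, so every coordinate on which they differ lies \emph{outside} $D^i$ --- i.e., is a bad coordinate that is non-constant on both sides of the $i$-cut --- and nothing prevents such a coordinate from flipping right next to the endpoint $w$. Concretely, take terminals $(0,0,1),(0,0,0),(1,0,0),(1,0,1),(1,1,1)$ with the optimal tree being the path through them in that order (edge labels $3,1,3,2$; here $d=3$, $q=1$). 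Coordinate $1$ is good, its edge joins the terminals $(0,0,0)$ and $(1,0,0)$, and $\P=\{C\}$ is proper, yet $D^1=\{2\}$ and each side of the $1$-cut contains two terminals matching the pattern, so no unique matcher exists. Thus the uniqueness you set out to prove does not follow from the lemma's hypotheses, and \splito can fail. The paper does not prove uniqueness inside this lemma at all; it is established separately in the proof of Lemma~\ref{clm:correctness-step-2}, where the extra structure available there --- an adjacent good coordinate $j$ on the same path, defining a different cut, with a degree-two terminal between the $i$- and $j$-edges --- forces that terminal to be the only node with $x_i=1-b_i$ and $x_j=b_j$. You should either restrict your lemma to the conditional claim (``\emph{if} \splito returns a partition built from the terminal endpoint, it is proper'') or import the adjacency hypothesis used by the paper; as written, step (3) cannot be completed.
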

\begin{proof}[Sketch]
  Since $\P$ is proper, then the induced tree $T[P]$ is the only tree
  in the forest that contains the $i$-labeled edge. The lemma then
  follows from showing that $x$ and $\bar x^i$ are the two endpoints
  of $i$-labeled edge in $T[P]$. This follows from the observation
  that the endpoints of the $i$-labeled edge must both match the
  pattern of $i$. Let $u$ be an endpoint and wlog $u$ belongs to the
  $(i=0)$-side of the cut. On all coordinates that are fixed on the
  $(i=0)$-side, $u$ obviously has the right values. All coordinates
  that are fixed on the $(i=1)$-side can only flip on the
  $(i=0)$-side, but only after traversing $u$, so $u$ has them set to
  the value fixed on the $(i=1)$-side.
\end{proof}{\qed}

\subsection{The Algorithm}
\label{subsec:algorithm}

We can now introduce our algorithm. 

\vspace{-.2in}
\begin{figure}[ht]
  \fbox{\parbox{\textwidth}{ 
\textsf{\textbf{ input:}} A partition $\P$ of current
      terminals. Initially, $\P$ is the singleton set $\P = \{C\}$.
\begin{enumerate}
\item \textsf{\textbf{ if}} \pluck succeeds and returns $(x, \P')$
\begin{itemize}
\item recurse on \P', then \paste $x$ back and return the resulting forest.
\end{itemize}
\item \textsf{\textbf{ else-if}} the number of non-constant coordinates on $\P$ is at least $40q^2$
\begin{itemize}
\item Pick a non-constant coordinate $i$ u.a.r and invoke \spliti.
\item \textsf{\textbf{ if}} \splito succeeds: recurse on $\P'$, then \merge $x$ and $\bar x^i$, and return the resulting forest; \textsf{\textbf{otherwise} fail}.
\end{itemize}
\item \textsf{\textbf {else}}
\begin{itemize}
\item For every $P\in\P$ find its MST, $T(P)$, and return the forest $\{T(P)\}$.
\end{itemize}
\end{enumerate}
}}
\caption{\label{alg:simple-case} Algorithm for the simple case}
\end{figure}
\vspace{-.35in}
\begin{theorem}
\label{thm:correctness-of-alg-simple-case}
With probability $\geq 1/2$, the algorithm in
Figure~\ref{alg:simple-case} returns a tree whose cost is at most
$d+O(q^2)$.
\end{theorem}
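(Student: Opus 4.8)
The plan is to analyze the recursion tree of the algorithm and bound two quantities: the probability that any \spliti{} call fails, and the total excess cost accumulated over all the leaf-level MST computations. The key structural invariant, which I would establish first, is that the partition $\P$ is proper at every recursive call: this holds initially since $\P = \{C\}$, is preserved by \pluck{} by Lemma~\ref{clm:pluck-succeeds}, and is preserved by \spliti{} by Lemma~\ref{clm:split_succeeds}, \emph{provided} the randomly chosen coordinate $i$ is good and the $i$-labeled edge in $T$ has a terminal endpoint. So the crux is to show that with probability $\geq 1/2$, every \spliti{} invocation over the entire execution is made on such a ``safe'' coordinate. Since the number of \spliti{} calls is at most $d$ (each removes a coordinate and decrements the dimension), it suffices to show that at each such step, a uniformly random non-constant coordinate is safe with probability $\geq 1 - O(q^2/d) \cdot (\text{something summable})$, or more simply, that at the moment \spliti{} is invoked there are at least $40q^2$ non-constant coordinates, of which at most $O(q)$ are unsafe, so a union bound over $\leq d$ steps still leaves failure probability bounded. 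I would argue that a coordinate is unsafe only if it is bad ($\leq q$ such coordinates) or if both endpoints of its $i$-labeled edge are Steiner nodes; the latter should also be bounded by $O(q)$ using a charging argument on the optimal tree $T$ restricted to the current components (a good coordinate whose edge has two Steiner endpoints forces extra structure near bad coordinates or near the $O(q)$ ``branch'' Steiner nodes). The precise count of unsafe-but-good coordinates is the first place careful bookkeeping is needed, and combined with the $40q^2$ threshold this gives per-step failure probability $O(1/q^2)$ or better, hence total failure $\le 1/2$.

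Conditioned on every \spliti{} being safe, the partition stays proper throughout, and I would then bound the cost of the returned forest. The cost decomposes as: one edge per \pluck{}/\paste{} pair, one edge per \spliti{}/\merge{} pair, plus the sum of the MST costs at the base cases. Each \pluck{} and each \spliti{} removes exactly one coordinate from the active problem, and since properness guarantees the optimal tree restricted to the current components is edge-disjoint, the coordinates removed this way are ``charged'' once each to edges of $T$; so the \pluck{}/\paste{} and \spliti{}/\merge{} edges together cost at most $d + q$ in total (at most the number of edges of $T$), and in fact at most $d$ plus a small correction. The real work is the base case: when fewer than $40q^2$ non-constant coordinates remain in a component $P$, we take its MST $T(P)$. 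I would bound $\mathrm{cost}(T(P))$ against $\mathrm{cost}(T[P])$, the optimal Steiner subtree on $P$: a standard doubling/Euler-tour argument gives $\mathrm{cost}(\mathrm{MST}(P)) \le 2\,\mathrm{cost}(T[P])$, but that is too lossy. Instead I would use that $T[P]$ has at most $40q^2$ relevant coordinates and relatively few Steiner nodes, so its MST excess over the number of active coordinates is at most the number of Steiner nodes plus the excess of $T[P]$, all of which are $O(q^2)$; summing over the (disjoint, by properness) base-case components, the optimal subtrees $T[P]$ are edge-disjoint subtrees of $T$, so their excesses sum to at most the total excess $q$, and the number of extra MST edges is $O(q^2)$ total. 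Adding everything: $d + q + O(q^2) = d + O(q^2)$.

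I expect the main obstacle to be the base-case MST bound, specifically showing that $\sum_P \bigl(\mathrm{cost}(\mathrm{MST}(P)) - (\#\text{active coords in }P)\bigr) = O(q^2)$. The subtlety is that a single component $P$ at a base case might itself have excess, Steiner nodes, and up to $40q^2$ active coordinates, and a naive MST bound on one such component is already $\Theta(q^2)$ -- so if there were many base-case components each contributing $\Theta(q^2)$, the total would blow up. The resolution must use properness crucially: the subtrees $T[P]$ for distinct base-case components $P$ are edge-disjoint inside the single optimal tree $T$ of total excess $q$, so only $O(q)$ of them can have any excess or any Steiner node at all, and each of those contributes $O(q^2)$ to the MST bound while the rest contribute $O(1)$ each (their $T[P]$ is already a perfect phylogeny on its coordinates, whose MST equals the coordinate count). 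Making this charging precise -- and handling the interaction between the $40q^2$ threshold, the number of unsafe coordinates, and the edge-disjointness across the whole recursion -- is where the constant $40$ gets used and where I would spend most of the care; the rest of the argument is routine.
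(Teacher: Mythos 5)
There is a genuine gap in your probability analysis: you bound the number of \spliti{} invocations by $d$ and then hope for a per-call failure probability small enough that a union bound over $d$ steps works. But the best per-call bound available from your own setup is $\Theta(1/q)$ --- at most $O(q)$ unsafe coordinates (bad ones, plus good ones whose edge has two Steiner endpoints) out of at least $40q^2$ non-constant ones --- and $d\cdot\Theta(1/q)$ is unbounded; your claim of per-step failure $O(1/q^2)$ has no source, and even that would not suffice when $d\gg q^2$. The missing idea is that \spliti{} is called only $O(q)$ times, not $d$ times. The paper's argument: form the \emph{good path decomposition} of $T[\P]$ (break at degree-$\geq 3$ nodes, delete bad edges); when \pluck{} fails, $T[\P]$ has at most $2q$ leaves (every leaf must hang off a bad edge), so there are $t\le 4q$ paths. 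A successful \spliti{} on a coordinate lying on a path of length $\ge 2$ turns both endpoints of that edge into leaves of their components, after which \pluck{} unravels the \emph{entire} path --- so \spliti{} is invoked at most once per path, i.e.\ at most $4q$ times, and each call is unsafe only if it hits a bad coordinate or a length-$1$ path, giving failure probability $\le (q+4q)/(40q^2)=1/(8q)$ per call and $4q\cdot\frac{1}{8q}=1/2$ overall. Note also that the paper's criterion for safety on long paths uses the simple-case assumption (adjacent coordinates $i,j$ on a path define distinct cuts, forcing the node between them to be a terminal), which is where your vaguer ``charging on Steiner nodes'' needs to be made precise.

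Your treatment of the base case is, conversely, far more elaborate than needed and rests on a misreading of the threshold: the algorithm enters the base case when the \emph{total} number of non-constant coordinates across all of $\P$ is below $40q^2$, so the optimal Steiner forest on the remaining components has total cost at most $d'+q< 40q^2+q$, and the plain MST $2$-approximation you dismiss as ``too lossy'' already gives total cost $\le 2(d'+q)=O(q^2)$. There is no danger of many components each contributing $\Theta(q^2)$. The final accounting is then $d-d'$ reconnection edges plus $2(d'+q)$, i.e.\ $d+40q^2+2q$.
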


% Informally, the proof of Theorem~\ref{thm:correctness-of-alg-simple-case} consists of three parts. First, we show that the {\tt pluck-a-leaf} and {\tt split} steps are exact, in that the coordinates processed at those steps will cost exactly $1$ in the output tree, just as they do in $T^{opt}$.  Next, we show that the algorithm does not abort, with high probability. Finally, we show that the coordinates processed in the base of recursion will each cost $O(1)$ in the output tree (since there are $O(q^2)$ coordinates processed in the base step, this implies the theorem).

In order to prove Theorem~\ref{thm:correctness-of-alg-simple-case}, fix an
optimal phylogeny $T$ over our initial set of terminals, and for any partition $\P$ our algorithm creates, denote $T[\P]$ as the forest induced by $T$ on this partition. The proof of the theorem relies on the following lemma.

\begin{lemma}
\label{clm:correctness-step-2}
If $\P$ is a proper partition, then with probability $\geq 1-(8q)^{-1}$,
\splito is called on a good coordinate and succeeds. Furthermore, \splito is executed at most $4q$ times.
\end{lemma}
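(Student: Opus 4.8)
The plan is to derive both assertions from one structural fact about the optimal tree $T$ that holds whenever \spliti is invoked, i.e.\ whenever \pluck has just failed on the whole partition: at that moment every leaf of every component tree $T[P]$ is incident to a \bad edge. Indeed, if a leaf $t$ of $T[P]$ were attached by an edge labeled with a \good coordinate $i$, then — as $i$ flips exactly once in all of $T$ — deleting that edge from $T[P]$ isolates $t$, so $t$ is the only terminal of $P$ on its side of the $i$-cut and \pluck would have succeeded on $(P,i,t)$, a contradiction (this is of a piece with the proof of Lemma~\ref{clm:pluck-succeeds}). Since the excess of $T$ is $q$, summing $f_c-1$ over the \bad coordinates $c$ (where $f_c$ is the number of times $c$ flips) equals $q$; hence there are at most $q$ \bad coordinates and at most $2q$ \bad edges. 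As each \bad edge meets at most two leaves and distinct components are edge-disjoint, the forest has at most $4q$ leaves, so at most $2q$ components of size $\geq 2$.

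For the probability statement, condition on an invocation of \spliti; then $\P$ is proper, \pluck has failed, and $i$ is uniform over the $\geq 40q^2$ coordinates non-constant on $\P$. With probability at most $q/(40q^2)$, $i$ is \bad. If $i$ is \good, \splito succeeds and, by Lemma~\ref{clm:split_succeeds}, returns a proper partition, as long as the $i$-labeled edge of $T$ has a terminal endpoint that is moreover the \emph{unique} terminal matching the pattern of $i$ on its side of the $i$-cut — for then that terminal is exactly the $x$ selected by \spliti and $\bar x^i$ its other endpoint. The \good coordinates for which this can fail are those whose edge has \emph{both} endpoints enclosed by \bad structure: a Steiner node, or a terminal sharing its class under contracting \bad edges with another terminal. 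One charges such coordinates to the $\leq 2q$ \bad edges and to the branch vertices of $T$ (whose count is controlled by the $\leq 4q$ leaves), bounding their number by $O(q)$; with the constant $40$ chosen so that $q+O(q)\le 40q^2/(8q)$, the per-invocation failure probability is at most $1/(8q)$.

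For the deterministic count, refine the structural fact: at an invocation of \spliti, every \good edge still in the forest has a \bad edge on \emph{each} of its two sides. For if a side of a \good edge $e$ of $T[P]$ contained no \bad edge, it would contain no leaf of $T[P]$ (all leaves being \bad-adjacent), hence be a single vertex — a leaf of $T[P]$ attached by the \good edge $e$ — contradicting the failure of \pluck. Thus every time \splito cuts a \good edge it splits the \bad edges of that component into two \emph{nonempty} groups; and \pluck only edits a component in place (deleting at most the \bad edges it currently contains) and never merges components. Record the history of the partition as a rooted binary tree: each \splito is a binary node, and every component ultimately passed to the MST step or emptied by \pluck is a leaf. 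Each such leaf was created holding at least one \bad edge, and one can pin to it a \bad edge whose descent path through the history (at each \splito it moves to one child; it terminates where it is plucked or reaches the MST step) ends at that leaf; this assignment is injective, so the history tree has at most $2q$ leaves and therefore at most $2q-1\le 4q$ \splito invocations.

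The crux is the counting in the second paragraph: bounding by $O(q)$, tightly enough to fit alongside the $q$ \bad coordinates under the $40q^2$ budget, the \good coordinates on which \spliti fails or returns a non-proper partition, which demands a careful charging of these enclosed \good edges to \bad edges and to Steiner/branch vertices of $T$, all phrased in the excess $q$. The third-paragraph argument is clean once the "\bad edge on each side" refinement is in hand, but one must check that \pluck's deletions of \bad edges between consecutive \splito calls do not upset the injective assignment — they do not, since those deletions only remove \bad edges and never carry one across a future split.
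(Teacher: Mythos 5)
Your third paragraph is a correct and genuinely different argument for the ``at most $4q$ executions'' half of the lemma: you charge history-tree leaves injectively to \bad\ edges, using the refinement that when \pluck fails every \good\ edge has a \bad\ edge on each side. The paper instead argues that a successful \splito on a \good\ edge lying on a path of length $\geq 2$ of the (good) path decomposition turns both endpoints into leaves, so \pluck subsequently unravels the whole path, whence \splito fires at most once per path, i.e.\ at most $t\leq 4q$ times. Your version is arguably cleaner and avoids having to reason about what \pluck does after a split; both rest on the same structural fact (leaves of $T[\P]$ are \bad-adjacent when \pluck fails).

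The probability bound, however, has a genuine gap, and you flag it yourself: you never actually identify which \good\ coordinates cause \spliti to fail or to return a non-proper partition, and you never carry out the ``$O(q)$'' count, let alone with a constant small enough that $q+O(q)\leq 5q$. Your proposed characterization (``both endpoints enclosed by bad structure \dots charge to \bad\ edges and branch vertices'') is left as an intention, and it is not the right dichotomy as stated: the real issue is not merely whether an endpoint of the $i$-edge is a terminal, but whether some side of the $i$-cut has a \emph{unique} terminal matching $i$'s pattern \emph{and} that terminal is the correct endpoint. The paper's mechanism is the good path decomposition: if the \good\ edge $i$ is adjacent on its path to another \good\ edge $j$, then the node joining them is the only node of $T$ with $x_i=1-b_i$ and $x_j=b_j$; in the simple case ($i$ and $j$ define distinct cuts) that node must be a terminal, and since $j\in D^i$ it is the unique pattern-matcher on its side, so \splito succeeds and is correct. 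Hence the only dangerous coordinates are the $\leq q$ \bad\ ones and the \good\ coordinates forming length-$1$ paths, of which there are at most $t\leq 4q$, giving failure probability $\leq 5q/40q^2 = 1/8q$. Without this (or an equivalent completed charging argument), the first assertion of the lemma is unproven. A smaller point: from ``every leaf is incident to a \bad\ edge'' you get $4q$ leaves where the paper claims $2q$; this is harmless for the counting in your third paragraph but would change the constants if you tried to reuse the leaf bound to count length-$1$ paths as the paper does.
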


\begin{proof}[of Theorem~\ref{thm:correctness-of-alg-simple-case}]
  The proof follows from lemmas~\ref{clm:pluck-succeeds} and~\ref{clm:correctness-step-2}. Since we start with a proper partition, then with
probability at least $1 - (4q)(8q)^{-1} \geq 1/2$ we keep recursing on proper partitions, until reaching the base of the recursion. By the time the algorithm reaches the base of the recursion, the dimensionality of the problem was reduced to $d' \leq 40q^2$, so the cost of the optimal Steiner forest is at most $d'+q$. As MSTs give a $2$-approximation to the optimal Steiner tree problem, our forest is of cost $\leq 2(d'+q)$. Then, the algorithm reconnects the forest, adding the coordinates (edges) the algorithm as removed in the first two steps of the algorithm. Since the algorithm removed at most $d-d'$ edges, the tree it outputs is of overall cost at most $d - d'+ 2(d' +q) = d + 40q^2 +  2q$.
\end{proof}{\qed}

\begin{proof}[of Lemma~\ref{clm:correctness-step-2}]
Let $\P$ be the partition in the first iteration of the algorithm for which \splito was invoked, and assume $\P$ is proper. Thus, the forest $T[\P]$ contains disjoint components. We call any vertex in
  this forest of degree $\geq 3$ an \emph{internal split}. Suppose we
  replace each internal split $v$ with $deg(v)$ many new vertices, each adjacent to one edge. This breaks the forest into a collections of paths we call the \emph{path decomposition} of the tree. In addition, remove from this path decomposition all edges that are labeled with a bad coordinate to obtain the \emph{good path decomposition}. Denote the number of paths in the good path decomposition as $t$.

First, we claim that any call to \splito (on $\P$ or any partition succeeding $\P$), on a coordinate $i$ which lies on a path of length $\geq 2$ in the abovementioned decomposition, does not fail.

Assume \splito was called on $i$ and denote its adjacent coordinate on
the path as $j$~(choose one arbitrarily if $i$ has two adjacent
coordinates on its path), and both are non-constant on $P\in
\P$. Observe that our decomposition leaves only good coordinates, so
both $i$ and $j$ are good. Therefore, $j$ is fixed on one side of the
$i$-cut and $i$ is fixed on one side of the $j$-cut. It follows that
there exist binary values $b_i, b_j$ s.t. for every $x\in P$, if
$x_i=b_i$ then $x_j = b_j$; and if $x_j = 1-b_j$ then $x_i= 1-b_i$. In
fact, the only node on the entire tree for which $x_i=1-b_i$ and $x_j
= b_j$ is the node connecting the $i$-edge and the $j$-edge. Recall
that we assume for the special case $i$ and $j$ do not define the same
cut. It follows that the node between $i$ and $j$ has to be a
terminal, so now we can use Lemma~\ref{clm:split_succeeds} and deduce
\splito succeeds.

So, \splito can either fail or return a non-proper partition only if it
was invoked either on a bad coordinate or on a good coordinate that
lies on a path of length $1$ in our path decomposition. There are at
most $q$ bad coordinates and at most $t$ paths of length $1$, so each
call to \splito fails w.p. $\leq \frac {q + t} {40q^2}$. Furthermore,
calling \splito on a good edge $i$ lying on a path of length at least
$2$ results in both $i$'s endpoints as new leaves in their respective
sides of the $i$-cut. As a result, \pluck then completely unravels the
path on which $i$ lies. Therefore, in a successful run of the
algorithm, \splito is called no more than $t$ times. All that remains
is to bound $t$.

$t$ is the number of paths on the path decomposition of $\P$, a
partition for which \pluck failed to execute. Observe that if the
forest $T[\P]$ had even a single leaf connected to the rest of its
tree by a good coordinate, then \pluck would continue -- such a leaf,
by definition, is the only terminal on which the good coordinate
takes a certain value. It follows that $l$, the number of leaves in
$T[\P]$ is bounded by $2q$, the number of bad edges in $T$. Removing
the internal splits then leaves us with at most $2l$ paths; removing
the bad coordinates' edges adds at most $2q-l$ new paths (for every
bad coordinate $k$ adjacent to a leaf, removing $k$ does not create a
new path). All in all, $t \leq 2l + 2q-l \leq 4q$. Therefore, each
call to \splito has success probability $\geq 1 - \frac{4q+q} {40q^2} =
1-\frac 1 {8q}$, and \splito is called at most $4q$ times.
\end{proof}{\qed}

\section{The General Case: Interchangeable Coordinates May Exist}
\label{sec:general-case}

Before describing the general case, let us briefly discuss why the
conventional way of initially contracting all interchangeable
coordinates and applying the algorithm from the
Section~\ref{sec:simple-case} might result in a tree of cost $d +
\omega(q^2)$. The analysis of the first two steps of the algorithm
still holds. The problem lies in the base of the recursion, where the
algorithm runs the MST-based $2$-approximation. Indeed, the MST
algorithm is invoked on $<40q^2$ contracted coordinates, but they
correspond to $\tilde d$ original coordinates, and it is possible that
$\tilde d\gg q^2$. So by using any constant approximation on this
entire forest, we may end with a tree of cost $d+ 2\tilde d$ which
isn't $d+O(q^2)$.

Our revised algorithm does not contract edges initially. Instead, let
us define a \emph{simple} coordinate as one for which \spliti
succeeds. So, the first alteration we make to the algorithm is to call
\splito as long as the set of simple coordinates is sufficiently
big. However, most alterations lie in the base of the recursion. Below
we detail the algorithm and analyze its correctness. In the
algorithm's description, for any coordinate $i$ we denote the set of
coordinates interchangeable with $i$ by $W_i$, and their number as
$w(i)=|W(i)|$.

\begin{theorem}
\label{thm:correctness-of-alg-general-case}
With probability $\geq 1/2$, the algorithm in Figure~\ref{alg:general-case} returns a tree of cost
$d+O(q^2)$.
\end{theorem}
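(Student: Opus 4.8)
The plan is to mirror the structure of the proof of Theorem~\ref{thm:correctness-of-alg-simple-case}, isolating exactly where the presence of interchangeable coordinates breaks the simple-case argument and patching each break. The first two phases of the algorithm---\pluck and repeated \splito---carry over essentially verbatim: Lemma~\ref{clm:pluck-succeeds} never used the distinctness assumption, and the analog of Lemma~\ref{clm:correctness-step-2} should still hold once we re-interpret ``simple coordinate'' as a coordinate on which \spliti succeeds. The key structural fact to re-establish is that whenever \pluck is stuck and the good path decomposition of the proper partition $T[\P]$ has a path of length $\geq 2$, the interior node joining two consecutive good edges $i,j$ on that path is either a terminal (giving a simple coordinate directly via Lemma~\ref{clm:split_succeeds}) \emph{or} $i$ and $j$ are interchangeable---and in the latter case, by Fact~\ref{fact:basic-properties}(1) an entire maximal block of mutually interchangeable coordinates sits contiguously on that path, so we can still find a terminal-bounded place to split at the ends of that block. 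Counting as before, the number of ``bad'' places to split (bad coordinates plus length-$1$ good paths) is $O(q)$, so a uniformly random coordinate among the $\geq 40q^2$ non-constant ones lands on a good, splittable place with probability $\geq 1-1/(8q)$, and \splito is invoked at most $O(q)$ times; a union bound over these $O(q)$ invocations keeps us on proper partitions with probability $\geq 1/2$.

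The real work---and the main obstacle---is the base of the recursion, where we have $<40q^2$ non-constant \emph{contracted} coordinates but potentially $\tilde d \gg q^2$ original coordinates grouped into interchangeable classes $W_i$. Running a naive $2$-approximate MST here costs $d + 2\tilde d$, which is unacceptable. The fix, which I would spell out, is to exploit Fact~\ref{fact:basic-properties}(1): within any component $P$, each class $W_i$ of size $w(i)$ appears in the optimal tree $T[P]$ as a \emph{path} of $w(i)$ edges all labeled by coordinates of $W_i$, whose internal nodes are degree-$2$ Steiner nodes. So the excess contributed by $W_i$ beyond its ``perfect'' single mutation is $w(i)-1$ only if $W_i$ contains a bad coordinate or is otherwise forced to repeat; if $W_i$ is entirely good, the path contributes exactly $w(i)$ edges, which is optimal and we merely need to realize that path cheaply. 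Concretely, the algorithm should: (i) solve the Steiner forest on the $<40q^2$ \emph{representative} coordinates (one per class) using the MST $2$-approximation, paying $\leq 2(40q^2+q) = O(q^2)$; then (ii) for each representative edge labeled by class $W_i$ that actually appears in that forest, ``expand'' it into a path of $w(i)$ edges by inserting Steiner nodes, paying exactly $w(i)$ per used class. Since each original coordinate is used at most once across all such expansions plus the handful of reconnection edges from \pluck/\splito, the total cost is $d + O(q^2)$.

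The delicate point in step (ii) is that a representative-coordinate edge in the \emph{approximate} forest may be traversed in a way that forces a class to be used on ``the wrong side,'' or a class might not be cleaf-adjacent in the approximate forest even though it was in $T$; I would argue that because we only need an \emph{upper} bound, expanding each used class into its $w(i)$-edge path is always feasible (just subdivide the edge and relabel), and each class contributes at most $w(i)$ edges to the final tree, so summing over all classes plus the $O(q^2)$ ``genuinely excess'' edges from the MST phase and the $O(q)$ reconnection edges gives $\sum_i w(i) + O(q^2) = d + O(q^2)$. The second subtlety is ensuring the modified first phase is still well-defined when \pluck and \splito interact with classes---e.g., plucking a leaf may leave a class not-yet-fully-unraveled---but this only affects bookkeeping: a class is ``resolved'' when its representative coordinate becomes constant, and the $40q^2$ threshold is on non-constant representatives, so the recursion depth and the $O(q)$ bound on \splito calls are unaffected. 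Assembling these pieces---proper-partition invariance through phases~1--2 with failure probability $\leq 4q \cdot (8q)^{-1} \leq 1/2$, the $O(q^2)$ cost at the base via the representative-then-expand trick, and the $d - d' + O(q^2) + (\text{expansion of removed edges})$ accounting---yields the claimed $d+O(q^2)$ bound with probability $\geq 1/2$.
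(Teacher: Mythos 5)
Your treatment of phases 1--2 (\pluck and \splito on simple coordinates, the proper-partition invariant, and the $O(q)\cdot(8q)^{-1}$ union bound) matches the paper's argument and is fine. The gap is in the base of the recursion, and it is exactly the pitfall the paper warns about at the start of Section~\ref{sec:general-case}. Your ``representative-then-expand'' scheme runs the MST $2$-approximation on the contracted instance and then expands each \emph{used} class $W_i$ into a $w(i)$-edge path, asserting that ``each original coordinate is used at most once across all such expansions.'' That assertion is unjustified and in general false: the MST-based forest on the contracted terminals is only guaranteed to have at most $2(d''+q)$ contracted edges, where $d''<8q^2$ is the number of contracted non-constant coordinates, so it may flip the same contracted coordinate $\bar i$ on up to $\Theta(q^2)$ distinct edges. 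Each such occurrence crosses the $i$-cut, and since all coordinates of $W_i$ define that same cut, every occurrence must be expanded into a full $w(i)$-edge path. With heavy classes ($w(i)$ as large as $\tilde d/q^2$), the expanded cost can be $\tilde d + \Omega(q^2)\cdot\max_i w(i) = \omega(\tilde d)$, i.e.\ additive excess $\omega(q^2)$ --- the multiplicative error of MST in the contracted space becomes an uncontrolled additive error after expansion.

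The paper's Lemma~\ref{clm:correctness-bottom-of-recursion} avoids this by never running MST across a heavy class. It first splits on every class with $w(i)>q$: such a class is all-good (at most $q$ coordinates are bad and good/bad coordinates cannot share a cut), so by the pattern $(D^i,\mathbf{b}^i)$ the endpoints of its path are known up to at most $q$ coordinates, and the algorithm synthesizes artificial endpoints $y(i)$, $\overline{y(i)}$ at an additive cost of $O(q)$ per split; Proposition~\ref{pro:terminal-on-a-good-path} bounds the number of such non-simple heavy classes by $O(q)$, giving $O(q^2)$ total. What remains after these splits is an instance with genuinely $O(q^2)$ \emph{original} coordinates (at most $8q^2$ simple-coordinate edges on terminal-bearing paths plus at most $4q\cdot q$ edges on short terminal-free paths), on which the MST $2$-approximation is safe. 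Your proposal is missing this splitting step and the $y(i)$ construction, which is the main technical content of the general case.
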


The proof of Theorem~\ref{thm:correctness-of-alg-general-case} follows
the same outline as the proof of
Theorem~\ref{thm:correctness-of-alg-simple-case}. Observe that
Lemmas~\ref{clm:pluck-succeeds} and~\ref{clm:correctness-step-2}
still hold\footnote{Clearly, \splito cannot abort now, but it might be
  the case that the algorithm picks $i$ which is a bad
  coordinate. This can happen with probability $\leq q/8q^2 =
  1/8q$.}. Therefore, with probability $\geq 1/2$, the algorithm
enters the base of the recursion with a proper partition. Thus, by the following lemma, the algorithm outputs a tree of cost $d+O(q^2)$.

\begin{lemma}
\label{clm:correctness-bottom-of-recursion}
Assume that the base of the recursion~(i.e., Step 3) is called on a proper partition $\P$ of the terminals over $d'$ non-constant coordinates. Then
the algorithm returns a forest of cost $d' + O(q^2)$.
\end{lemma}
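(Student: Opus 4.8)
The plan is to mimic the simple-case analysis but handle the crucial difference: when \pluck and \splito can no longer make progress, the $d'$ non-constant coordinates include large blocks of interchangeable coordinates, and we must not pay a $2$-approximation on all of them. So I would first show that at the base of the recursion the $d'$ coordinates split into two groups: the ``non-redundant'' coordinates — roughly, one representative per interchangeable class — of which there are only $O(q^2)$ by the same argument as Lemma~\ref{clm:correctness-step-2} (the partition is one on which \pluck failed and for which not enough simple coordinates remain, so the good path decomposition has $O(q)$ paths and hence $O(q^2)$ representative coordinates survive), and the ``redundant'' coordinates, i.e. the extra copies $W_i\setminus\{i\}$ within each interchangeable class. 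By Fact~\ref{fact:basic-properties}(1), a class of $w(i)$ interchangeable coordinates contributes a path of $w(i)$ consecutive degree-$2$ edges in $T$, so these redundant coordinates are essentially free: they cost exactly $w(i)-1$ extra edges beyond the single representative, and that is also what the optimal tree pays.

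Next I would make that intuition into the algorithm. At Step~3, rather than running an MST on the full $d'$-dimensional instance, I would first contract each interchangeable class down to a single representative coordinate, obtaining an instance on $d'' = O(q^2)$ coordinates whose optimal Steiner forest has cost at most $d'' + q$; run the MST $2$-approximation there to get a forest of cost $\le 2(d''+q) = O(q^2)$; and then, for each contracted class with $w(i)$ members, expand the single representative edge back into a path of $w(i)$ edges labeled by the coordinates of $W_i$ — one edge per coordinate. Because all coordinates in $W_i$ define the same cut on $\P$ (and hence on every component), this expansion is well-defined: wherever the representative edge is used, we subdivide it into $w(i)$ edges realizing all the flips in $W_i$; wherever the representative coordinate is constant on a component, all of $W_i$ is constant there too and nothing is added. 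The total number of edges added in the expansion is $\sum_i (w(i)-1) \le d' - d''$, so the final forest has cost at most $(d' - d'') + 2(d'' + q) = d' + d'' + 2q = d' + O(q^2)$.

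I then need to argue this expanded forest is actually a valid phylogeny (each new edge flips the coordinate it is labeled with, consistent with the endpoints), which follows because subdividing an edge that flips a cut into a chain that flips the same cut one coordinate at a time is exactly the operation in Fact~\ref{fact:basic-properties}(1), and the Steiner nodes introduced on the chain are legitimate hypercube points. Finally I combine this with the already-established fact (via Lemmas~\ref{clm:pluck-succeeds} and~\ref{clm:correctness-step-2}, which the text says still hold) that with probability $\ge 1/2$ the recursion reaches Step~3 on a proper partition having removed only good coordinates in Steps~1--2, so the outer reconnection via \paste and \merge adds back exactly the removed edges and the cost accounting closes as in Theorem~\ref{thm:correctness-of-alg-simple-case}.

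The main obstacle I anticipate is the bound $d'' = O(q^2)$ on the number of distinct cuts among the $d'$ non-constant coordinates at the base of the recursion. In the simple case this was immediate because distinct coordinates meant distinct cuts; here I must re-run the good-path-decomposition counting argument of Lemma~\ref{clm:correctness-step-2} at the level of \emph{cuts} rather than coordinates, and I must be careful that the algorithm's Step~2 guard is phrased in terms of \emph{simple} coordinates (those for which \spliti succeeds) rather than raw non-constant coordinates — so I need to check that when too few simple coordinates remain, the number of distinct good cuts (not coordinates) is $O(q^2)$, using Fact~\ref{fact:basic-properties}(4) to know good and bad coordinates never share a cut, and Lemma~\ref{clm:split_succeeds} together with the path-decomposition argument to see that a good cut lying on a length-$\ge 2$ good path is simple. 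A secondary subtlety is confirming that ``interchangeable on $\P$'' behaves consistently across all components of $\P$ and is preserved well enough that the class-expansion step is unambiguous; this should reduce to Fact~\ref{fact:basic-properties}(1) but deserves an explicit sentence.
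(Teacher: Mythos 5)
There is a genuine gap, and it is exactly the pitfall the paper flags at the opening of Section~\ref{sec:general-case}. Your plan is: contract each interchangeable class to one representative, run the MST $2$-approximation on the resulting $d''=O(q^2)$-dimensional instance, and then expand ``the single representative edge'' of each class back into a path of $w(i)$ edges, charging $\sum_i (w(i)-1)\le d'-d''$ for the expansion. That accounting silently assumes the MST forest flips each contracted representative exactly once. An MST in Hamming space carries no such guarantee --- it is only a $2$-approximation, and it routinely flips a coordinate twice (e.g., terminals $000,110,011$ force the middle coordinate to flip on both MST edges, whereas the optimal tree flips it once via a Steiner node). If a contracted class with $w(i)\gg q$ is used twice by the MST, the expansion pays $2w(i)$ for that class instead of $w(i)$, so the excess over $d'$ includes a term $w(i)$ that can be $\omega(q^2)$ --- indeed as large as $\Theta(\tilde d)$ where $\tilde d$ is the total weight of heavy classes. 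This is precisely why the paper says the ``conventional'' contract-then-approximate route can output a tree of cost $d+\omega(q^2)$. Re-weighting the contracted coordinates by $w(i)$ does not help either: a multiplicative $2$-approximation in the weighted metric costs $2d'+\cdots$.

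The paper's proof avoids this by never letting the approximation step touch a heavy class. Every contracted coordinate with $w(i)>q$ is necessarily good (Fact~\ref{fact:basic-properties}(4)) and is split upon \emph{before} the MST runs: if it is simple the true edge endpoints are recovered, and otherwise the artificial endpoints $y(i),\overline{y(i)}$ are introduced, which Fact~\ref{fact:basic-properties}(2)--(3) show are correct up to the at most $q$ bad coordinates, costing $O(q)$ per split and $O(q^2)$ over the at most $O(q)$ termless long paths. Each heavy class is then reinserted exactly once, by construction, in the reconnection step. What remains for the MST is only the residual instance $\P^{\rm post}$, which the path-decomposition count shows supports a Steiner forest with $O(q^2)$ edges, so even doubling it is affordable. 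Your counting of distinct cuts ($d''=O(q^2)$) is fine and matches the paper's argument; the missing idea is the explicit splitting on heavy non-simple classes with synthesized endpoints, which is what makes the per-class ``pay $w(i)$ exactly once'' claim true.
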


The full proof of Lemma~\ref{clm:correctness-bottom-of-recursion} is deferred to the appendix. However, let us sketch the main outline of the proof. Recall the good path decomposition we used in the proof of Lemma~\ref{clm:correctness-step-2}. We partition its paths in the following way.

\begin{figure}[ht]
\fbox{\parbox{\textwidth}{
\textsf{\textbf{ input:}} A partition $\P$ of current
      terminals. Initially, $\P$ is the singleton set $\P = \{C\}$.
\begin{enumerate}
\item \textsf{\textbf{ if}} \pluck succeeds and returns $(x, \P')$
\begin{itemize}
\item recurse on \P', then \paste $x$ and return the resulting forest.
\end{itemize}
\item \textsf{\textbf{ else-if}} the number of simple coordinates on $\P$ is at least $8q^2$
\begin{itemize}
\item Pick a simple coordinate $i$ u.a.r and invoke \spliti.
\item \textsf{\textbf{ if}} \splito succeeds: recurse on $\P'$, then \merge $x$ and $\bar x^i$, and return the resulting forest; \textsf {\textbf {otherwise} fail}.
\end{itemize}
\item \textsf{\textbf{ else}}
\begin{itemize}
\item Contract all $W_i$ into $\bar i$
\item For every $\bar i$ with $w(i)>q$ and the (unique) component $P$ in
  which the $i$-cut resides,
\begin{itemize}
\item Apply pattern matching to $(P, i)$. Let $(D^i, \textbf{b}^i)$ be the pattern of $i$. 
\item \textsf{\textbf{if}} $i$ is simple, split $P$ into $P_0\cup\{x^i\}$ and
  $P_1\cup\{\overline{x}^i\}$.
\item  \textsf{\textbf{else}}
\begin{itemize}
\item Define the node $y(i)$ as the node where $y_i = 0$, every
  coordinate $j\in D^i$ is set to $b^i_j$, and every coordinate $j\notin
  D^i$ is set to $0$.
\item Define $\overline {y(i)}$ to be $y(i)$ with coordinate $i$ flipped.
\item  $\P = \P\setminus \{P\} \cup \{P_0 \cup \{y(i)\}\} \cup \{P_1\cup \{\overline{y(i)}\}\}$.
\end{itemize}
\end{itemize}
\item For every $P\in\P$ find its MST $T(P)$, and retrieve
  the forest $\{T(P)\}$.
\item For every $i$ with $w(i)>q$: 
\\\textsf{\textbf{if}} $i$ was simple, add an edge labeled $i$ between $x^i$ and $\overline{x^i}$
\\ \textsf{\textbf{else}} add an edge labeled $i$ between $y(i)$ and  $\overline{y(i)}$.
\item Expand all contracted coordinates to their original set of
  coordinates by replacing $i$ with a path of length $w(i)$. Return the resulting forest.
\end{itemize}
\end{enumerate}
}}
\caption{\label{alg:general-case} Algorithm for the general case}
\vspace{-.3in}
\end{figure}

\begin{itemize}
\item {\it Paths with at least one terminal on them.} On such paths,
  because all interchangeable coordinates may appear in $T$ in any order, then
  all coordinates on such paths are simple. So, when we enter the base of the
  recursion, there are at most $8q^2$ edges on such paths.
\item {\it Paths with no terminal on them, with length $> q$.} Such
  paths are composed of interchangeable coordinates, and since there
  are more than $q$ of those, we deduce all of them are
  good. Therefore, the endpoints of such paths are fixed up to at most
  $q$ (bad) coordinates. We therefore contract these edges, split on
  them, and introduce into each side of the cut an arbitrary endpoint,
  by replacing non-fixed coordinates with zeros. So on each side of
  the cut the cost of the subtree increases by at most $q$, and since
  there are at most $4q$ such paths, our overall cost for introducing these artificial endpoints is $O(q^2)$.
\item {\it Paths with no terminal on them, with length $\leq q$.} Such
  paths are composed of interchangeable coordinates, but we do not
  contract them. Since there are at most $4q\cdot q$ edges on such
  paths, we run the MST approximation, and incur a cost of $O(q^2)$
  for edges on such paths.
\end{itemize}

\noindent\textbf{Runtime analysis:} \pluck can be implemented
in time linear in the size of the dataset, i.e. $O(nd)$. Counting the
number of simple coordinates takes time $O(nd^2)$, and \splito takes time $O(nd)$. A naive implementation of the base case of the recursion
takes time $O(nd^2)$ for contracting coordinates, and the rest can be
implemented in time $O(nd)$. Hence the time to process each node in
the recursion tree is at most $O(nd^2)$. Since there are at most $O(q)$
nodes in the recursion tree, the total runtime is $O(qnd^2)$.

\section{Discussion and Open Problems}
\label{sec:open-problems}

This paper presents a randomized approximation algorithm for
constructing near-perfect phylogenies. In order to achieve this, we
obtain a Steiner tree of low additive error.  However, from the
biological perspective, the goal is to find a good evolutionary tree,
one that will give correct answers to questions like ``what is the
common ancestor of the following species?'' or ``which of the two
gene-mutations happened earlier?''. Such questions, we hope, can be
answered by finding the most-parsimonious phylogenetic tree over the
given taxa. Hence, it is also desirable that any low-cost tree which
we output also captures a lot of the structure of the optimal tree.

We would like to point out that our algorithm in fact has this
valuable property.  Notice that until the base case of the recursion,
both \pluck and \splito subroutines construct
the optimal tree, and correctly identify the endpoints of the edges they remove. Even when the algorithm reaches the base case of the recursion -- we can declare every edge of weight $>q$ to be good, and we know its endpoints up to at most $q$ coordinates. In total, our
algorithm gives the structure of the optimal tree up to $O(q^2)$
edges, and those edges can be marked as ``unsure''.

Several open problems remain for this work. The most straight-forward one is whether one can devise an algorithm outputting a phylogenetic tree of cost $d + O(q)$? Alternatively, one may try to design exact algorithms that are efficient even for $q = \omega(\log d)$. We suspect that even the case of $q = O((\log d)^2)$ poses quite a challenge. Finally, extending our results to non-binary alphabets is intriguing. Note however that even the case of perfect phylogenies is NP-hard, and tractable only for moderately sized alphabets. Furthermore, our bottom-up approach completely breaks down for non-binary alphabets (see comment past Lemma~\ref{clm:pluck-succeeds}), so devising an additive-approximation algorithm for the phylogeny problem with non-binary alphabets requires a different approach altogether.

\bibliographystyle{splncs}
\bibliography{sources}

\appendix
\section{Appendix}
Here, we give short proofs of the basic properties presented in Fact
~\ref{fact:basic-properties}. Many of these results are also found in
other work ~\cite{blellochfixed06}.

\begin{enumerate}
\item Let $S$ be a set of coordinates that define the same cut over
  the terminals (up to renaming of $P_0$ and $P_1$).  Then whenever
  any edge in $T$ is labeled with some $i\in S$, then the edge lies on
  a path on which each edge is labeled with a unique coordinate from
  $S$, and no node of the $|S|$-long path is a terminal.
\begin{proof}
  Assume $i, j\in S$ define the same cut over $C$. Given some edge
  $e_i$ where $i$ flips, consider the shortest path from a terminal
  $t_1$ through $e_i$ to another terminal $t_2$. If every node on this
  path is of degree 2, there is no terminal before $t_2$, and $j$
  flips before $t_2$.  Now, suppose some node on the path between
  $t_1$ and $t_2$ has degree more than 2. Either $j$ flips on each
  outgoing path before any terminals, or $j$ does not define the same
  cut (since each outgoing path has a terminal on it). But if $j$
  flips on all outgoing paths before any terminal occurs on those
  paths, relabel the Steiner nodes on each path so that $j$ is
  constant along those outgoing paths. Then, add one Steiner node and
  an edge from the endpoint of $e_i$ which flips $j$. This tree has
  cost strictly less than the tree which flipped $j$ on each outgoing
  path, since there were at least two paths, each of which flipped
  $j$.

\end{proof}{\qed}
\item For any two good coordinates, $i\neq j$, one side of the $i$-cut
  is contained within one side of the $j$-cut. Alternatively, there
  exist values $b_j$ such that all terminals on one side of the
  $i$-cut have their $j$th coordinate set to $b_j$.
\begin{proof}
  Suppose $i$ is good. Then, consider the $i$-cut in $T$. Since $j$ is
  good, $j$ may flip only once in the tree. If $j$ flips in $C_0$,
  then $j$ is constant in $C_1$. If $j$ flips in $C_1$, then $j$ is
  constant in $C_0$.
\end{proof}{\qed}
\item Fix any good coordinate $i$ and let $j$ be a good coordinate
  such that all terminals on one side of the $i$-cut have their $j$
  coordinate set to $b_j$. Then both endpoints of the edge labeled $i$
  have their $j^{\textrm{th}}$ coordinate set to $b_j$.
\begin{proof}
  Suppose that some endpoint of the edge on which $i$ flips has $j$
  (which is constant on $C_0$) set to $\bar b_j$. Then, since the edge
  allows only one coordinate to flip across it, both endpoints are
  labelled with $j =\bar b_j$. Then, the side where $j$ is constant
  (say $C_0$) has to pay for $j$.

  If $j$ is constant and set to $b_j$ on $C_1$, $j$ has to flip twice,
  a contradiction since $j$ is good. If $j$ is constant and set to
  $\bar b_j$ on $C_0$, then the labels on $i$'s edge are labelled by
  some constant setting of $j$. If $j$ is non-constant Assuming $j$ is
  non-constant for $C_1$, $j$ flips somewhere in $C_1$. But then, $j$
  flips twice, contradicting the fact that $j$ is good.
\end{proof}{\qed}
\item A good coordinate $i$ and a bad coordinate $i'$ cannot define
  the same cut.
\begin{proof}
  This follows directly from Fact 2.1, since $i$ and $i'$ will occur
  the same number of times in an optimal tree and a good coordinate
  occurs exactly once, while a bad coordinate occurs at least twice.
\end{proof}{\qed}
\end{enumerate}

\section{Proof of Lemma~\ref{clm:correctness-bottom-of-recursion}}
\label{apx_sec:proof_of_claim}

Here we give the full proof of Lemma~\ref{clm:correctness-bottom-of-recursion}. For convenience, we reiterate the lemma.

\begin{lemma}
\label{apx_clm:correctness-bottom-of-recursion}
Assume that when step 3 is entered, $\P$ is a proper
partition of the terminals over $d'$ non-constant coordinates. Then
step 3 of the algorithm returns a forest of cost $d' + O(q^2)$.
\end{lemma}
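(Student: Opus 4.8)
The plan is to follow the outline sketched right after the lemma's statement, supplying the missing arguments for its three bullets. Fix an optimal tree $T$ and let $F := T[\P]$ be the forest it induces on the base-case partition. Since Step~3 is entered only after \pluck\ has failed and the test in Step~2 has failed, two structural facts come for free: exactly as in the proof of Lemma~\ref{clm:correctness-step-2} --- every leaf of $F$ is attached by a bad edge, so $F$ has at most $2q$ leaves --- the \emph{good path decomposition} of $F$ consists of $t\leq 4q$ paths; and there are fewer than $8q^2$ simple coordinates. I also use that $T$, hence $F$, has at most $q$ bad coordinates (and at most $2q$ bad edges); that interchangeability is an equivalence relation whose classes are each either all-good or all-bad (Fact~\ref{fact:basic-properties}(4)); and that by Fact~\ref{fact:basic-properties}(1) each class occupies one contiguous ``class-path'' of $F$ with no terminal strictly inside it.

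\paragraph{Classifying the paths.} I would sort the $t$ paths of the good path decomposition into the three types of the sketch and bound how many coordinates lie on each. (a) \emph{A path that meets a terminal.} For any coordinate $i$ on it, the class-path of $i$ has a terminal as an endpoint; Fact~\ref{fact:basic-properties}(1) then lets us reorder that class inside $T$ so that the edge labelled $i$ is incident to a terminal, and Lemma~\ref{clm:split_succeeds} shows \spliti\ succeeds, i.e.\ $i$ is simple --- so at most $8q^2$ coordinates lie on such paths. (b) \emph{A terminal-free path of length $>q$.} This is a single interchangeability class $W_i$ with $w(i)>q$, which is all-good by Fact~\ref{fact:basic-properties}(4); by Facts~\ref{fact:basic-properties}(2,3), both endpoints of the contracted $\bar i$-edge agree with the pattern $(D^i,\mathbf{b}^i)$ on all of $D^i$, whose complement consists only of bad coordinates and so has size $\leq q$. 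There are at most $t\leq 4q$ such paths. (c) \emph{A terminal-free path of length $\leq q$.} There are at most $4q$ of these, carrying at most $4q^2$ coordinates in total. Every coordinate that lies outside a heavy class ($w(i)>q$) is bad or sits on a type-(a) or type-(c) path, so there are only $L:=O(q^2)$ of them; equivalently $\sum_{\text{heavy }i} w(i)=d'-L$.

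\paragraph{The cost bookkeeping.} Now charge the output forest. Each heavy class $W_i$ is treated in Step~3 either as a simple class, where Lemma~\ref{clm:split_succeeds} makes $x^i$ and $\overline{x^i}$ the true endpoints of the $\bar i$-edge, or as a non-simple class, where $y(i)$ differs from the true endpoint only on the $\leq q$ coordinates outside $D^i$; in both cases $\bar i$ becomes constant on the two halves, so it reappears only on its single reconnection edge and, after the final expansion, contributes exactly $w(i)$ edges with no overhead. Thus the heavy classes contribute $\sum_{\text{heavy }i} w(i)=d'-L$ edges. All remaining edges form the MST forest over the components, which contain the terminals, the light contracted coordinates, and the $\leq 4q$ artificial endpoints $y(i)$. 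Since the MST is a $2$-approximation in the Hamming metric --- computed with weight $w(i)$ on each contracted coordinate $\bar i$, so the guarantee survives the expansion into paths --- it is enough to bound $\sum_P \mathrm{OPT}(P)$, and this is at most $L$ (the non-constant light and bad coordinates) $+\;q$ (the total excess of $F$) $+\;O(q^2)$ (relocating each of the $\leq 4q$ non-simple heavy endpoints from $y(i)$ to its true position, at cost $\leq q$ each), i.e.\ $O(q^2)$. Hence the output has cost $(d'-L)+O(q^2)=d'+O(q^2)$.

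\paragraph{Main obstacle.} I expect the crux to be part (a): proving rigorously that every coordinate on a terminal-meeting path is simple --- equivalently, that a non-simple coordinate can lie only on a terminal-free path --- which is precisely what makes the constant $8q^2$ in Step~2 of the algorithm the correct threshold. This means reconciling three notions (``$i$ is simple'', i.e.\ \spliti\ finds a unique pattern-matching terminal on one side of the cut; ``the class of $i$ touches a terminal in some reordering of $T$''; and the degree/terminal structure of the path decomposition), using Fact~\ref{fact:basic-properties} and Lemma~\ref{clm:split_succeeds} --- including the uniqueness assertion implicit in the latter --- with care. A secondary point requiring attention is verifying that inserting the artificial endpoints $y(i)$ raises each component's optimal Steiner cost by at most $q$, and that the MST is run with the true Hamming weights so its $2$-approximation guarantee is not lost when contracted coordinates are expanded.
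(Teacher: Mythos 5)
Your proof follows essentially the same route as the paper's: the same three-way classification of the good path decomposition, the same key claim that coordinates on terminal-containing paths are simple (the paper's Proposition~\ref{pro:terminal-on-a-good-path}), the same $\leq q$ bound on relocating each artificial endpoint $y(i)$ to the true endpoint of its class-path, and the same MST $2$-approximation bookkeeping against an explicit $O(q^2)$-cost witness forest. The only cosmetic difference is that the paper states the simplicity claim for paths with a \emph{non-endpoint} terminal and derives it by reordering the interchangeability class so that $i$ sits next to the nearest terminal and next to a good coordinate $j$ inducing a different cut---exactly the reconciliation you flag as the crux---so your identification of where the care is needed matches the paper's actual argument.
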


\begin{proof}
  Recall that for any coordinate $i$, we define the weight of $i$ as the total number of coordinates interchangeable with $i$.
  Let $\P^{\rm post}$ denote the instance which results after
  splitting on all coordinates with weight greater than $q$. The proof
  of the lemma reduces to showing that there exists a forest over
  $\P^{\rm post}$ of cost $O(q^2)$. If such a forest exists,
  the MST-based $2$-approximation for each component returns a forest
  of cost $O(q^2)$, and reconstructing the tree with plucked and split
  edges has weight at most $d'$. So, the forest over $\P$ we
  get has cost $\leq d'+O(q^2)$.

  The set of all terminals in all components of $\P^{\rm
    post}$ is composed of terminals that belong to the original
  $\P$, and the new terminals, added by coordinates of weight
  more than $q$ which are \emph{not} simple. We construct a forest of
  low cost over $\P^{\rm post}$ by (i) taking an optimal
  Steiner forest $\mathcal{T}$ over $\P$, (ii) for every
  simple coordinate which is split upon, remove the path of
  corresponding coordinates from $\mathcal{T}$, and (iii) for every
  non-simple coordinate split upon, remove the path of corresponding
  coordinates \emph{and} introduce the two vertices $y(i)$ and
  $\overline{y(i)}$, connecting each vertex to the end-point of the
  path that resides in the same side of the cut. We denote the
  resulting forest by $\mathcal{T}^{\rm post}$, and show that
  $\mathcal{T}^{\rm post}$ contains at most $O(q^2)$ edges.

  First, observe that since there are at most $q$ bad coordinates, by
  splitting only on coordinates with weight more than $q$, we are
  guaranteed to split only on good coordinates (we know from Fact
  \ref{fact:basic-properties} that the good and bad coordinates cannot
  define the same cut). Therefore, since $\P$ is proper, the
  coordinate $i$ resides in a single component. Second, observe that
  the base of the recursion is executed only when \pluck fails. Therefore, as in the proof of Lemma~\ref{clm:correctness-step-2}, $\mathcal{T}$ is a
  forest with at most $2q$ leaves, so its path decomposition contains
  at most $4q$ paths. Our next observation is the following
  proposition.

\begin{proposition}
\label{pro:terminal-on-a-good-path}
Fix a path in the path decomposition of the forest. If there exists
some non-endpoint terminal on this path, all coordinates on this path
are simple.
\end{proposition}
\begin{proof}
  Let $i$ be a coordinate associated with an edge on such a path. Let
  $x^i$ be a terminal on the path which is the closest to $i$. Recall
  the observation from before, that any two coordinates that yield the
  same terminal cut must appear in the optimal tree adjacent to one
  another, and furthermore, their order does not matter (see Fact
  ~\ref{fact:basic-properties}). Therefore, we may assume $i$ is
  adjacent to $x^i$. Furthermore, as $x^i$ is not an endpoint, there
  exists a good coordinate $j\neq i$ adjacent to $x^i$ on this path. It
  follows that $i$ and $j$ determine two different cuts over the set
  of terminals. Then, just as in the proof of
  Lemma~\ref{clm:correctness-step-2}, $i$ is simple, where $x^i$ is the
  unique terminal on one side of its cut matching $i$'s pattern.
\end{proof}{\qed}

Proposition~\ref{pro:terminal-on-a-good-path} means that any
non-simple coordinate corresponds to an entire path in the path
decomposition (because all edges on a path with no terminals on it
determine the same cut). We deduce that (a) the number of non-simple
contracted coordinates of weight more than $q$ is at most
$8q$. Furthermore, the number of edges on any path of length no more
than $q$ with no terminal on them is at most $8q^2$. Finally, since
the base of the recursion is executed only when \splito fails to execute, the
number of edges on paths that do have a terminal on them is at most
$16q^2$. It follows that the path decomposition of $\mathcal{T}$
contains at most $16 q^2$ edges, in addition to the edges on paths of
length more than $q$ with no terminal on them (and each such path
causes the algorithm to split exactly once).

We can now upper bound the number of edges in $\mathcal{T}^{\rm
  post}$. The forest $\mathcal{T}^{\rm post}$ contains at most $2q
\leq 2q^2$ bad edges; at most $16q^2$ edges from $\mathcal{T}$; and
all the paths between the vertices we introduce by adding the
$y(i)$-vertices and connecting them to $\mathcal{T}$. Let $i$ be a
non-simple contracted coordinate which was split upon, and let $u\to
v$ be the corresponding path on $\mathcal{T}$ which was
removed. Assume $y(i)$ resides on the same side of the cut as $u$.
Clearly, $u$ and $y(i)$ identify on all coordinates on the $u\to v$
path, but they also identify on all other good coordinates: a good
coordinate $j$ appears most once in $\mathcal{T}$, so $u$, $v$ and all
terminals on at least one side of the $u-v$ cut has the same value on
their $j$th coordinate. It follows that the path between $y(i)$ and
$u$ is of length at most $q$, and the same applies to the path between
$\overline{y(i)}$ and $v$. Therefore, the number of edges on paths we
have yet to upper bound, sum to no more than $8q \cdot 2 \cdot q =
16q^2$. Thus $\mathcal{T}^{\rm post}$ contains no more than $34q^2$
edges. This completes the proof.
\end{proof}{\qed}

\end{document}